\pgfplotsset{compat=newest}
\newtheorem{theorem}{Theorem}
\newtheorem{lemma}[theorem]{Lemma}
\newtheorem{corollary}[theorem]{Corollary}
\newtheorem{proposition}[theorem]{Proposition}
\newtheorem{problem}[theorem]{Problem}
\newtheorem{remark}[theorem]{Remark}
\newtheorem{definition}{Definition}
\algrenewcommand\alglinenumber[1]{{\scriptsize#1}}
\algrenewcommand\algorithmicrequire{\textbf{Input:}}
\algrenewcommand\algorithmicensure{\textbf{Output:}}
\def\mkfancyprefix#1#2{%
\expandafter\def\csname fancyref#1labelprefix\endcsname{#1}%
\begingroup\def\x{\endgroup\frefformat{plain}}%
    \expandafter\x\csname fancyref#1labelprefix\endcsname
    {\MakeLowercase{#2}\fancyrefdefaultspacing##1}%
\begingroup\def\x{\endgroup\Frefformat{plain}}%
    \expandafter\x\csname fancyref#1labelprefix\endcsname
    {#2\fancyrefdefaultspacing##1}%
\begingroup\def\x{\endgroup\frefformat{vario}}%
    \expandafter\x\csname fancyref#1labelprefix\endcsname
    {\MakeLowercase{#2}\fancyrefdefaultspacing##1##3}%
\begingroup\def\x{\endgroup\Frefformat{vario}}%
    \expandafter\x\csname fancyref#1labelprefix\endcsname
    {#2\fancyrefdefaultspacing##1##3}%
}
\fancyrefchangeprefix{\fancyrefeqlabelprefix}{eqn}
\newcommand{\cref}[1]{\Fref{#1}}
\def\ve#1{{\mathchoice{\mbox{\boldmath$\displaystyle #1$}}%
              {\mbox{\boldmath$\textstyle #1$}}%
              {\mbox{\boldmath$\scriptstyle #1$}}%
              {\mbox{\boldmath$\scriptscriptstyle #1$}}}}
\definecolor{sunset}{rgb}{1,0.5,.05}
\definecolor{marine}{rgb}{0,0,.7}
\definecolor{navy}{rgb}{0,0,.5}
\definecolor{forest}{rgb}{0,.6,0}
\definecolor{brown}{rgb}{0.59, 0.29, 0.0}
\definecolor{darkgreen}{rgb}{0,0.6,0}
\definecolor{darkred}{rgb}{0.7,0,0}
\newcommand{\old}[1]{}
\newcommand\Osoft{O^{\scriptscriptstyle \sim}\!}
\newcommand{\F}{\mathbb{F}}
\newcommand{\Fq}{\mathbb{F}_q}
\newcommand{\ZZ}{\mathbb{Z}}
\renewcommand{\vec}{\ve}
\newcommand{\x}{\ve x}
\renewcommand{\a}{\ve a}
\renewcommand{\b}{\ve b}
\renewcommand{\c}{\ve c}
\newcommand{\wt}{\mathrm{wt}}
\renewcommand{\r}{\ve r}
\renewcommand{\c}{\ve c}
\newcommand{\e}{\ve e}
\newcommand{\Eset}{\mathcal{E}}
\newcommand{\IntDegree}{m}
\newcommand{\MultParam}{s}
\newcommand{\PowParam}{\ell}
\newcommand{\tauKL}{\tau_\mathrm{KL}}
\newcommand{\tauCS}{\tau_\mathrm{CS}}
\newcommand{\tauWZB}{\tau_\mathrm{WZB}}
\newcommand{\tauRP}{\tau_\mathrm{new}}
\newcommand{\RS}{\mathcal{C}_\mathrm{RS}}
\newcommand{\IRS}{\mathcal{C}_\mathrm{IRS}}
\newcommand{\round}[1]{\lfloor #1 \rfloor}
\renewcommand{\j}{\ve j}
\renewcommand{\i}{\ve i}
\newcommand{\f}{\ve f}
\newcommand{\OmegaVec}{\ve \Omega}
\newcommand{\R}{\ve R}
\newcommand{\smaller}{\preceq}
\newcommand{\len}[1]{|#1|}
\newcommand{\Pade}{Pad\'e\xspace}
\newcommand{\chat}{\hat{\c}}
\newcommand{\psihat}{\hat{\psi}}
\newcommand{\lambdahat}{\hat{\lambda}}
\newcommand{\fhat}{\hat{\f}}
\newcommand{\0}{\ve 0}
\newcommand{\Rhat}{\hat{\R}}
\newcommand{\myindex}{\ve h}
\newcommand{\CountLt}[1]{\binom{\IntDegree+#1-1}{\IntDegree}}
\newcommand{\CountLeq}[1]{\binom{\IntDegree+#1}{\IntDegree}}
\newcommand{\WeightedLt}[1]{\IntDegree\binom{\IntDegree+#1-1}{\IntDegree+1}}
\newcommand{\CountLtt}[1]{\tbinom{\IntDegree+#1-1}{\IntDegree}}
\newcommand{\CountLeqt}[1]{\tbinom{\IntDegree+#1}{\IntDegree}}
\newcommand{\WeightedLeqt}[1]{\IntDegree\tbinom{\IntDegree+#1}{\IntDegree+1}}
\newcommand{\WeightedLtt}[1]{\IntDegree\tbinom{\IntDegree+#1-1}{\IntDegree+1}}
\newcommand{\word}[1]{\textnormal{#1}}
\newcommand\modop{\ \word{mod}\ }
\begin{document}

\title{Decoding of Interleaved Reed--Solomon Codes Using Improved Power Decoding}
\author{\IEEEauthorblockN{Sven Puchinger$^{1}$, Johan Rosenkilde n\'e Nielsen$^{2}$}\\
\IEEEauthorblockA{
$^1$Institute of Communications Engineering, Ulm University, Ulm, Germany\\
$^2$Department of Applied Mathematics \& Computer Science, Technical University of Denmark, Lyngby, Denmark\\
Email: \emph{sven.puchinger@uni-ulm.de}, \emph{jsrn@jsrn.dk}
}}

\maketitle

\begin{abstract}
  We propose a new partial decoding algorithm for $m$-interleaved Reed--Solomon (IRS) codes that can decode, with high probability, a random error of relative weight $1-R^{\frac{m}{m+1}}$ at all code rates $R$, in time polynomial in the code length $n$.
  For $m>2$, this is an asymptotic improvement over the previous state-of-the-art for all rates, and the first improvement for $R>1/3$ in the last $20$ years.
  The method combines collaborative decoding of IRS codes with power decoding up to the Johnson radius.
\end{abstract}

\begin{IEEEkeywords}
Interleaved Reed--Solomon Codes, Collaborative Decoding, Power Decoding with Multiplicities
\end{IEEEkeywords}

\section{Introduction}

\noindent
An $\IntDegree$-interleaved Reed--Solomon (IRS) code is a direct sum of $\IntDegree$ Reed--Solomon (RS) codes of the same evaluation points.
The relevant metric is to consider \emph{burst errors}, where an error corrupts the same position in all $\IntDegree$ codewords.
In this metric, IRS codes can be decoded far beyond half the minimum distance of the constituent RS codes with high probability, and the decoding problem has achieved a lot of attention in the last decades, e.g.~\cite{krachkovsky1997decoding,bleichenbacher2003decoding,schmidt2009collaborative,wachterzeh2014decoding,coppersmith2003reconstructing,parvaresh2007algebraic}.
All of these decoders are \emph{partial}, in that they will fail for a few error patterns of any weight beyond half the minimum distance of the weakest constituent code.

Often \emph{homogeneous} IRS codes are considered for simplicity, i.e.~where the constituent RS codes all have the same rate $R$.
For $R > 1/3$ and $m>2$, the state of the art is still given by \cite{krachkovsky1997decoding} with a relative decoding radius of  $\frac{\IntDegree}{\IntDegree+1}(1-R)$, see \cref{fig:radii_comparison_m=2}.
This decoding radius was echoed in \cite{schmidt2009collaborative} with a better complexity by solving the classical \emph{key equations} for the constituent RS code simultaneously.

Power decoding \cite{schmidt2010syndrome} is a partial decoding algorithm for RS codes that can decode beyond half the minimum distance for $R \leq 1/3$, achieving roughly the same decoding radius as the Sudan list-decoder~\cite{sudan1997decoding}.
The idea is to generate several linearly independent key equations from one received word by a non-linear operation.
In~\cite{schmidt2007enhancing}, it was proposed to use power decoding in IRS codes to obtain several key equations from each constituent received word, thereby increasing the decoding radius for $R \leq 1/3$.
\cite{wachterzeh2014decoding} refined this approach by mixing the received words of the IRS code in the powering process.

Recently \cite{nielsen2015power}, an improved power decoding algorithm for RS codes was proposed which is able to decode up to the Johnson radius. The gain is obtained by introducing multiplicities, similar to the Guruswami--Sudan algorithm, resulting in more linearly independent key equations.

In this paper, we apply the improved power decoding of \cite{nielsen2015power} to IRS codes using the ideas of \cite{schmidt2007enhancing, wachterzeh2014decoding}.
We obtain a system of key equations and show how to efficiently solve them.
We argue that the algorithm will decode with high probability up to a relative distance $1 - R^{\frac m {m+1}}$, and
we support this using simulations for a wide variety of parameters.
For $\IntDegree>2$,\footnote{In \cite{parvaresh2007algebraic}, an interpolation-based algorithm is proposed for $\IntDegree=2$ that can achieve the same decoding radius as ours. It is claimed that it generalizes to $\IntDegree>2$. However, no root-finding algorithm is given for the general case. See \cref{ssec:comparison_other_IRS_decoders} for more details.} this is an asymptotical improvement for all rates $R$ over the previous best \cite{wachterzeh2014decoding,coppersmith2003reconstructing} (cf.~\cref{fig:radii_comparison_m=2} for the case $\IntDegree=3$).

The speed of our decoder depends on the sought performance: decoding up to relative distance $1 - R^{\frac m {m+1}} - \varepsilon$, the complexity is $\Osoft(n (1/\varepsilon)^{m \omega + 1})$, where $\Osoft$ means omission of logarithmic factors and $\omega$ is the exponent for matrix multiplication.
This matches or improves upon all previous algorithms whenever the decoding radius is comparable.

The decoding algorithm has been implemented in SageMath v7.5 \cite{stein_sagemath_????}, and the source code can be downloaded from \url{http://jsrn.dk/code-for-articles}.

\begin{figure}
\begin{center}
 \input{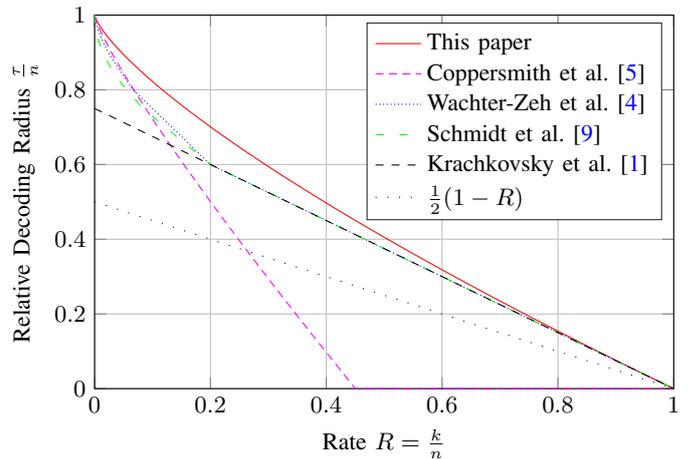}
\end{center}
\vspace{-0.3cm}
\caption{Comparison of Relative Decoding Radii for $\IntDegree=3$.}
\label{fig:radii_comparison_m=2}
\end{figure}

\section{Preliminaries}
\label{sec:preliminaries}

\noindent
Let $q$ be a prime power and $\Fq$ a finite field of size $q$.
\begin{definition}
Let $\alpha_1,\dots,\alpha_n \in \Fq$, $k<n$. The corresponding \emph{Reed--Solomon (RS) code} is the set
\begin{align*}
\RS(n,k) = \left\{ \left[f(\alpha_1),\dots,f(\alpha_n)\right] : f \in \Fq[x], \deg f <k \right\},
\end{align*}
where we call $f$ a \emph{message polynomial} and the $\alpha_i$'s \emph{evaluation points}.
An \emph{$\IntDegree$-Interleaved Reed--Solomon (IRS) code} is the direct sum of $\IntDegree$ RS codes $\RS(n,k_i)$ having the same evaluation points, i.e.,
\begin{align*}\def\arraystretch{0.9}
\IRS(n,k_1,\dots,k_\IntDegree)
&= \left\{ \begin{bmatrix} \c_1 \\ \vdots \\ \c_\IntDegree\end{bmatrix} : \c_i \in \RS(n,k_i) \right\} \in \Fq^{\IntDegree \times n} .
\end{align*}
A \emph{homogeneous} IRS code is an IRS code whose $\IntDegree$ constituent codes have the same dimension $k_i=k$.
\end{definition}

For simplifying of notation and analysis, we only consider homogeneous IRS codes in this paper.

As a metric for errors, we consider the \emph{burst error metric}: if $\vec r =\vec c +\vec e \in \Fq^{\IntDegree \times n}$ is received with $\vec c \in \IRS(n,k; m)$, then the \emph{error positions} are the non-zero columns of $\vec e$, i.e.~$\Eset = \bigcup_{i=1}^{\IntDegree} \{j : e_{i,j} \neq 0 \}$, and the \emph{number of errors} is $|\Eset|$.
Burst errors occur naturally, for instance, if we transmit with a symmetric $q^\IntDegree$-ary channel using the isomorphism $\Fq^{\IntDegree \times n} \simeq (\Fq^\IntDegree)^n$.

For a vector $\vec i = (i_1,\ldots, i_m) \in \ZZ_{\geq 0}^m$ we define its \emph{size} as $|\vec i| := \sum_t i_t$.
We denote by $\smaller$ the product partial order on $\ZZ_{\geq 0}^m$, i.e.~
$\i \smaller \j$ whenever $i_t \leq j_t$ for all $t$.
The number of vectors $\vec i \in \ZZ_{\geq 0}^m$ with $\len{\i} = \mu$ is given by $\tbinom{m+\mu-1}{\mu}$.
We will use the following well-known relations.
\begin{lemma}\label{lem:binomial_sums}
Let $m,t \in \ZZ_{>0}$. Then,
\begin{align*}
\sum_{\mu=0}^{t} \tbinom{m+\mu-1}{\mu} = \tbinom{m+t}{m}, \text{ and }
\sum_{\mu=0}^{t-1} \mu \tbinom{m+\mu-1}{\mu} = r \tbinom{m+t-1}{m+1}.
\end{align*}
\end{lemma}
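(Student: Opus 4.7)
Both identities are variants of the classical \emph{hockey stick} identity $\sum_{j=0}^{s}\binom{a+j}{j} = \binom{a+s+1}{s}$, and the plan is to derive the first directly and then reduce the second to the first.

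For the first identity, I would prove
\[
\sum_{\mu=0}^{t}\binom{m+\mu-1}{\mu} = \binom{m+t}{m}
\]
by induction on $t$. The base case $t=0$ reads $\binom{m-1}{0}=\binom{m}{m}=1$. For the inductive step, add $\binom{m+t}{t+1}$... wait, actually one just adds the next term $\binom{m+t}{t+1}$ to both sides; more cleanly, assuming the identity for $t-1$, we get $\sum_{\mu=0}^{t}\binom{m+\mu-1}{\mu} = \binom{m+t-1}{m}+\binom{m+t-1}{m-1} = \binom{m+t}{m}$ by Pascal's rule (after rewriting $\binom{m+t-1}{t}=\binom{m+t-1}{m-1}$). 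This is a well-known argument and I expect no obstacle here.

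For the second identity, the key manipulation is the absorption rule
\[
\mu\binom{m+\mu-1}{\mu} \;=\; m\binom{m+\mu-1}{\mu-1},
\]
which follows immediately from $\mu\cdot\mu!^{-1}=(\mu-1)!^{-1}$ in the factorial expansion. Applying this and reindexing $\nu=\mu-1$,
\[
\sum_{\mu=0}^{t-1}\mu\binom{m+\mu-1}{\mu} \;=\; m\sum_{\nu=0}^{t-2}\binom{(m+1)+\nu-1}{\nu}.
\]
The inner sum is exactly the first identity with $m$ replaced by $m+1$ and $t$ replaced by $t-2$, giving $\binom{m+t-1}{m+1}$, which yields the desired expression $m\binom{m+t-1}{m+1}$.

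There is no real obstacle; the only subtlety is bookkeeping the edge cases $t=1$ (where the sum is empty and both sides vanish since $\binom{m}{m+1}=0$) and $t=2$ to make sure the reindexing is consistent. Everything else is routine manipulation of binomial coefficients.
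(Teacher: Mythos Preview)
Your proof is correct; the paper itself does not give a proof of this lemma, merely stating the identities as ``well-known relations'' and moving on. Your approach via the hockey-stick identity for the first sum and the absorption identity $\mu\binom{m+\mu-1}{\mu}=m\binom{m+\mu-1}{\mu-1}$ to reduce the second to the first is the standard one, and you correctly identified that the ``$r$'' on the right-hand side of the second identity is a typo for ``$m$''.
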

We also introduce the following notational short-hands:
\begin{definition}
For $\a \in \Fq[x]^m$, and $\i,\j \in \ZZ_{\geq 0}^m$, we define
\begin{align*}
\a^\i := \prod_{t=1}^{m} a_t^{i_t}, \quad
\binom{\j}{\i} := \prod_{t=1}^{m} \binom{j_t}{i_t}.
\end{align*}
\end{definition}
We then have a vectorised binomial theorem:
\begin{lemma}\label{lem:generalized_binomial}
Let $r \in \ZZ_{>0}$, $\a,\b \in \Fq[x]^m$, and $\j \in \ZZ_{\geq 0}^m$. Then,
\begin{align*}
(\a+\b)^\j = \sum\limits_{\i \smaller \j} \binom{\j}{\i} \a^\i \b^{\j-\i}.
\end{align*}
\end{lemma}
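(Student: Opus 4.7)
The identity is essentially a product of scalar binomial expansions, one per coordinate, so the plan is to reduce it to the classical one-variable binomial theorem and then reorganize the resulting product of sums into a single sum over multi-indices.

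First, by the definition of $\a^\j$, we have $(\a+\b)^\j = \prod_{t=1}^{m}(a_t+b_t)^{j_t}$. Applying the scalar binomial theorem in $\Fq[x]$ (or, more precisely, in any commutative ring containing the $a_t,b_t$) to each factor yields
\begin{align*}
(\a+\b)^\j = \prod_{t=1}^{m}\sum_{i_t=0}^{j_t}\binom{j_t}{i_t} a_t^{i_t}\,b_t^{j_t-i_t}.
\end{align*}

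Next, expanding the product of sums as a sum over all tuples $(i_1,\dots,i_m)$ with $0\le i_t\le j_t$, which is precisely the set $\{\i \in \ZZ_{\geq 0}^m : \i \smaller \j\}$, we get
\begin{align*}
(\a+\b)^\j = \sum_{\i \smaller \j}\;\prod_{t=1}^{m}\binom{j_t}{i_t} a_t^{i_t}\,b_t^{j_t-i_t}.
\end{align*}
Splitting the inner product into three products and applying the notational definitions of $\binom{\j}{\i}$, $\a^\i$, and $\b^{\j-\i}$ then gives the claimed identity.

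There is no real obstacle here: the only thing to be careful about is that all the $a_t, b_t$ commute (which they do, as elements of the commutative ring $\Fq[x]$), so that the scalar binomial theorem applies coordinatewise and the reordering of the product-of-sums into a sum-of-products is valid. The proof fits in a few lines; no induction on $m$ or on $|\j|$ is needed, though one could alternatively proceed by induction on $m$ if one preferred not to invoke the multi-index expansion of a product of sums directly.
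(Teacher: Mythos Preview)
Your proof is correct. The paper actually states this lemma without proof (it is presented as a ``vectorised binomial theorem'' immediately following the definitions of $\a^\i$ and $\binom{\j}{\i}$), so there is nothing to compare against; your argument is exactly the standard one the authors presumably had in mind.
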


\section{Key Equations}
\label{sec:new_decoder}

\noindent
In the following, we develop ``powered key equations'' for IRS codes similar to \cite{nielsen2015power}.
As in \cite{nielsen2015power}, we use the formulation of Power decoding introduced in \cite{nielsen_power_2014}.

Consider a decoding instance, i.e.~that $\vec r = \vec c + \vec e \in \Fq^{\IntDegree \times n}$ has been received with $\vec c \in \IRS(n, k; m)$ and $\Eset$ the corresponding error positions.
Let $\f = (f_1,\ldots,f_m) \in \Fq[x]_{< k}^m$ be the message polynomials corresponding to $\vec c$.

\begin{definition}\label{def:error_locator}
The \emph{error locator polynomial} is defined by
\begin{align*}
  \textstyle
\Lambda := \prod_{i \in \Eset} (x-\alpha_i).
\end{align*}
\end{definition}

\begin{definition}\label{def:Ri_G}
For $t=1,\dots,\IntDegree$, let $R_t \in \Fq[x]$ be the unique interpolation polynomial of degree $\deg R_t <n$ such that
\begin{align*}
R_t(\alpha_i) = r_{t,i} \quad \forall i=1,\dots,n.
\end{align*}
We write $\R = (R_1,\dots,R_\IntDegree)$.
Let $G \in \Fq[x]$ be given by
\begin{align*}
  \textstyle
G := \prod_{i=1}^{n} (x-\alpha_i).
\end{align*}
\end{definition}
Note that $\Lambda$ is not known at the receiver, but $\R$ and $G$ are. We can relate the polynomials as follows.

\begin{lemma}\label{lem:Omega}
Let $G$ and $\R$ be as in \cref{def:Ri_G}.
Then $G$ divides each element of $\Lambda(\f - \R)$.
\end{lemma}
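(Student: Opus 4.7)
The plan is to prove the divisibility by showing that each coordinate polynomial $\Lambda \cdot (f_t - R_t)$ vanishes at every evaluation point $\alpha_i$, and then invoke the fact that $G = \prod_{i=1}^n (x-\alpha_i)$ has exactly these $n$ distinct roots.

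First I would unpack the definitions. Since $\c = \mathrm{ev}(\f)$ componentwise, we have $c_{t,i} = f_t(\alpha_i)$, and since $R_t$ is defined as the interpolating polynomial through the points $(\alpha_i, r_{t,i})$, we have $R_t(\alpha_i) = r_{t,i} = c_{t,i} + e_{t,i} = f_t(\alpha_i) + e_{t,i}$. Hence for each $t$ and each $i$,
\begin{equation*}
\bigl(f_t(\alpha_i) - R_t(\alpha_i)\bigr) = -e_{t,i}.
\end{equation*}

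Next, I would split into two cases based on whether $i$ is an error position. If $i \in \Eset$, then by definition of $\Lambda$ the factor $(x - \alpha_i)$ appears in $\Lambda$, so $\Lambda(\alpha_i) = 0$; thus $\Lambda(\alpha_i)(f_t(\alpha_i) - R_t(\alpha_i)) = 0$. If $i \notin \Eset$, then by the definition of burst errors $\Eset = \bigcup_{t=1}^m \{j : e_{t,j} \neq 0\}$, the entry $e_{t,i}$ must be zero for every $t$, so $f_t(\alpha_i) - R_t(\alpha_i) = 0$; again the product vanishes. Either way, $\Lambda \cdot (f_t - R_t)$ has $\alpha_1, \dots, \alpha_n$ as roots, so $G$ divides it, for every $t$.

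The hard part is essentially nonexistent here: the argument is a one-line case split. The only point worth flagging is that it uses the burst-error convention crucially — namely that a non-error position in one constituent codeword is a non-error position in all of them — so that a single $\Lambda$ simultaneously locates the errors across all $m$ interleaved rows. This is precisely what makes the joint polynomial vector $\Lambda(\f - \R)$ a multiple of $G$ coordinatewise, and is the reason collaborative decoding can be set up at all.
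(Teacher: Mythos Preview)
Your proof is correct and follows essentially the same approach as the paper's: evaluate $\Lambda(f_t - R_t)$ at each $\alpha_j$, observe that either $\Lambda(\alpha_j)=0$ (when $j\in\Eset$) or $f_t(\alpha_j)-R_t(\alpha_j)=-e_{t,j}=0$ (when $j\notin\Eset$), and conclude that $G$ divides. The paper's proof is more terse but identical in substance.
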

\begin{proof}
For all $i=1,\dots,\IntDegree$ and $j=1,\dots,n$, we obtain
\begin{align*}
\big(\Lambda (f_i-R_i)\big)(\alpha_j) = \Lambda(\alpha_j) \cdot (-e_{i,j}) = \begin{cases}
0 \cdot (-e_{i,j}), &j \in \Eset \\
\Lambda(\alpha_j) \cdot 0, &\text{else}.
\end{cases}
\end{align*}
Thus, $(x-\alpha_j) \mid \Lambda (f_i-R_i)$ and the claim follows.
\end{proof}
In light of the above lemma, let
\[
  \OmegaVec := \Lambda (\f-\R)/G \in \Fq[x]^m .
\]
Following nomenclature from RS decoding, the $\OmegaVec$ is known as \emph{error evaluators}.
Note that the entries of $\OmegaVec$ each have degree less than $|\Eset|$.

We can now state the set of key equations that we will use for decoding.
It consists of $\binom{\IntDegree+ \PowParam}{\IntDegree}-1$ many relations between the unknown polynomials $\Lambda$, $\f$, and $\OmegaVec$.

\begin{theorem}[Key Equations]\label{thm:key_equation}
  Choose any $s, \ell \in \ZZ_{> 0}$ with $s \leq \ell$.
Let $\Lambda$, $G$, $\R$, $\f$, and $\OmegaVec$ be as above. Then, for all $\j \in \ZZ_{\geq 0}^\IntDegree$ with $1 \leq \len{\j} \leq \PowParam$, we have
\begin{align*}
\Lambda^\MultParam \f^\j
&= \sum\limits_{\i \smaller \j} \left[ \Lambda^{s-\len{\i}} \OmegaVec^\i \right] \left[ \binom{\j}{\i} \R^{\j-\i} G^{\len{\i}}\right], && \len{\j}<\MultParam \\
\Lambda^\MultParam \f^\j &\equiv \sum\limits_{\substack{\i \smaller \j \\ \len{\i} < \MultParam}}
\left[ \Lambda^{s-\len{\i}} \OmegaVec^\i\right]
\left[ \binom{\j}{\i} \R^{\j-\i} G^{\len{\i}}\right]
  \mod G^\MultParam, &&\len{\j} \geq \MultParam.
                        \label{eq:key_eq_2}
\end{align*}
\end{theorem}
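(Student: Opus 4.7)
The starting observation is that because $\Eset \subseteq \{1,\ldots,n\}$, the error locator $\Lambda$ is a factor of $G$, so $G' := G/\Lambda$ is itself a polynomial. Combining this with \cref{lem:Omega}, which unpacks to $\Omega_t G = \Lambda(f_t - R_t)$, I would divide by $\Lambda$ to obtain the local identity $f_t = R_t + \Omega_t G'$ as a genuine equality in $\Fq[x]$, for each $t=1,\ldots,\IntDegree$.

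With this in hand, the plan is to raise $f_t = R_t + \Omega_t G'$ to the $j_t$-th power, take the product over $t$, and apply the vectorised binomial theorem (\cref{lem:generalized_binomial}) with $\a = \R$ and $\b = \OmegaVec G'$, to arrive at
\begin{align*}
\f^\j \;=\; \sum_{\i \smaller \j} \binom{\j}{\i} \R^{\j-\i} \OmegaVec^\i (G')^{|\i|}.
\end{align*}
Multiplying by $\Lambda^\MultParam$ essentially puts the identity into the shape of the theorem. The remaining manipulation is to rewrite $\Lambda^{\MultParam}(G')^{|\i|}$ in terms of $G$: if $|\i| \leq \MultParam$ one absorbs $\Lambda^{|\i|}$ into the $G'$-factors, obtaining $\Lambda^{\MultParam-|\i|} G^{|\i|}$; if $|\i| > \MultParam$ one pulls out a full $(\Lambda G')^\MultParam = G^\MultParam$, leaving $(G')^{|\i|-\MultParam}$, so the whole term stays divisible by $G^\MultParam$.

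In the first case, $|\j| < \MultParam$, every $\i \smaller \j$ automatically satisfies $|\i| \leq |\j| < \MultParam$, so the first rewriting applies uniformly and produces the claimed polynomial identity. In the second case, $|\j| \geq \MultParam$, I would split the sum at $|\i|=\MultParam$: indices with $|\i| \geq \MultParam$ carry $G^\MultParam$ as a factor (directly for $|\i|>\MultParam$, and because $\Lambda^{\MultParam-|\i|}G^{|\i|} = G^\MultParam$ when $|\i|=\MultParam$) and hence vanish modulo $G^\MultParam$, while the indices with $|\i|<\MultParam$ are again handled by the first rewriting and yield exactly the right-hand side of the claimed modular identity.

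The main obstacle to watch out for is keeping the factor $\Lambda^\MultParam$ on the outside. If one instead started from $\Lambda f_t = \Lambda R_t + \Omega_t G$ and expanded, the left-hand side would become $\Lambda^{|\j|}\f^\j$, and for $|\j|>\MultParam$ one cannot cancel the excess $\Lambda^{|\j|-\MultParam}$ modulo $G^\MultParam$ since $\Lambda \mid G$. Passing through $G'=G/\Lambda$ sidesteps the cancellation issue entirely by keeping $\Lambda^\MultParam$ as an explicit outer multiplier from the outset.
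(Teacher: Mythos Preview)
Your argument is correct and follows essentially the same route as the paper: both expand $\f^\j$ via the vectorised binomial theorem and use $\Lambda(\f-\R)=\OmegaVec G$ to rewrite terms, showing that summands with $|\i|\geq s$ carry a factor $G^s$. Your introduction of $G'=G/\Lambda$ is a minor organisational variant that lets you write $\f=\R+\OmegaVec G'$ directly and thereby avoid the paper's index decomposition $\i=\i_s+\i'$ in the $|\i|\geq s$ case, but the underlying mechanism is identical.
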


\begin{proof}
Using \cref{lem:generalized_binomial}, we can write
\begin{align*}
\Lambda^s \f^\j = \Lambda^s \big((\f-\R) + \R\big)^\j = \Lambda^s \sum_{\i \smaller \j} \binom{\j}{\i} (\f-\R)^\i \R^{\j-\i}.
\end{align*}
For $\len{\i}<s$, we rewrite
\begin{align*}
\Lambda^s (\f-\R)^\i = \Lambda^{s-\len{\i}} [\Lambda(\f-\R)]^\i = \Lambda^{s-\len{\i}} \OmegaVec^\i G^{\len{\i}}.
\end{align*}
If $\len{\i} \geq s$, we decompose $\i= \i_s+\i'$, where $\i_s,\i' \in \ZZ_{\geq 0}^\IntDegree$ with $\len{\i_s} = s$. Using this notation, we obtain
\begin{align*}
\Lambda^s (\f-\R)^\i = [\Lambda(\f-\R)]^{\i_s} (\f-\R)^{\i'} = G^s \OmegaVec^{\i_s} (\f-\R)^{\i'},
\end{align*}
so all summands $\i$ with $\len{\i} \geq s$ are zero modulo $G^s$.
Hence, the equation and congruence above hold.
\end{proof}
We use the following abbreviations in the remaining sections.
\begin{align*}
\Psi_\j := \Lambda^\MultParam \f^\j, \;
\Lambda_\i := \Lambda^{s-\len{\i}} \OmegaVec^\i, \text{ and }
A_{\i,\j} := \tbinom{\j}{\i} \R^{\j-\i} G^{\len{\i}}.
\end{align*}

\section{Solving the Key Equations}
\label{sec:solving}

\noindent
The key equations of \cref{thm:key_equation} are non-linear relations between the unknowns $\Lambda, \f$ and $\OmegaVec$ and therefore a priori difficult to solve.
The approach, as for classical key equation decoding, is to relax the non-linear relations to---seemingly much weaker---linear relations, and then hope that a minimal solution to these is the sought.
The linear problem that we will relax into is a very general variant of \Pade approximations:

\begin{problem}
  \label{prob:SimHermitePade}[Simultaneous Hermite \Pade approximation]
  Given $S_{i,j}, G_j \in \Fq[x]$ for all $i \in I, j \in J$ for index sets $I,J$ as well as degree bounds $N_i, T_j \in \ZZ_{\geq 0}$, compute, if it exists, $\lambda_i, \psi_j \in \Fq[x]$, not all zero and such that
  \[
    \sum_{i \in I} \lambda_i S_{i,j} \equiv \psi_j \mod G_j, \quad\forall j \in J \ ,
  \]
 as well as $\deg \lambda_i < N_i$ and $\deg \psi_j < T_j$ for all $i \in I, j \in J$.
\end{problem}

Such $\lambda_i, \psi_j$ is a \emph{solution} to the \Pade approximation.
If furthermore $\lambda_{\vec 0} \neq 0$ and has minimal degree for a specific $\vec 0 \in I$, we say that it is a \emph{minimal solution}.

\begin{proposition}
  \label{prop:decoding_shpade}
  Let $\tau \geq \deg(\Lambda)$ and $\Lambda_\i, \Psi_\j$ and $A_{\i,\j}$ be defined as in \cref{sec:new_decoder}.
  Then $\Lambda_\i, \Psi_\j$ is a solution to the simultaneous Hermite \Pade approximation given by:
  \begin{align*}
    I                 & = \{ \i \in \ZZ_{\geq 0}^m, |\i| < s \} & J & = \{ \j \in \ZZ_{\geq 0}^m,  |\j| \leq \ell \} \\
    S_{\i, \j}          & = A_{\i,\j} & G_\j              & = \left\{
                                                          \begin{array}{ll}
                                                            x^{\tau s + |\j|(n - 1) + 1} & |\j| < s \\
                                                            G^s           & |\j| \geq s
                                                          \end{array}\right. \\
    N_\i              & = \tau s - |\i|+1             & T_\j &= \tau s + |\j|(k-1) + 1 \ ,
  \end{align*}
  and leader index $\vec 0$ (all-zero vector).
\end{proposition}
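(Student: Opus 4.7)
The plan is to verify the three defining properties of a solution to the simultaneous Hermite \Pade approximation: that the congruences hold, that the degree bounds are satisfied, and that the purported solution is nonzero at the leader index.

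First I would establish the congruences by invoking \cref{thm:key_equation} in its two cases. For $|\j| \geq s$, the key equation is already a congruence modulo $G^s = G_\j$, and its sum ranges over all $\i \smaller \j$ with $|\i| < s$. Since $A_{\i,\j} = \binom{\j}{\i}\R^{\j-\i}G^{|\i|}$ and $\binom{\j}{\i}$ vanishes whenever $\i \not\smaller \j$, one can harmlessly extend the summation to the full index set $I$. For $|\j| < s$ the key equation is an exact equality, which trivially reduces to a congruence modulo the chosen $G_\j = x^{\tau s + |\j|(n-1)+1}$; here all $\i \smaller \j$ automatically satisfy $|\i| \leq |\j| < s$, so they lie in $I$, and the sum again extends to the rest of $I$ by the same vanishing argument.

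Next I would verify the degree bounds using $\deg \Lambda = |\Eset| \leq \tau$ and $\deg \Omega_t < |\Eset| \leq \tau$ (the latter is observed right after \cref{lem:Omega}). These give
\[
\deg \Lambda_\i \;=\; \deg\bigl(\Lambda^{s-|\i|}\OmegaVec^\i\bigr) \;\leq\; (s-|\i|)\tau + |\i|(\tau-1) \;=\; s\tau - |\i| \;<\; N_\i,
\]
and $\deg \Psi_\j = \deg(\Lambda^s \f^\j) \leq s\tau + |\j|(k-1) < T_\j$ since $\deg f_t < k$. Finally, $\Lambda_{\vec 0} = \Lambda^s$ is nonzero (as $\Lambda$ is a nonzero monic polynomial), so the solution is nontrivial and $\Lambda_{\vec 0} \neq 0$ at the leader index.

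No step is particularly delicate. The only subtlety is the bookkeeping between summation indices: the key equations are indexed by $\i \smaller \j$, whereas the \Pade system uses only the total-size constraint $|\i| < s$. Reconciling the two via $\binom{\j}{\i} = 0$ for $\i \not\smaller \j$ (absorbed in $A_{\i,\j}$) is the only minor obstacle, so the whole verification is short and routine.
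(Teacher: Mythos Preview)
Your proposal is correct and follows essentially the same approach as the paper: invoke the key equations of \cref{thm:key_equation} for the congruences (equality for $|\j|<s$, congruence mod $G^s$ for $|\j|\geq s$) and verify the degree bounds directly from $\tau\geq\deg\Lambda$ and $\deg f_t<k$. You are simply more explicit than the paper about the index bookkeeping (extending the sum via $\binom{\j}{\i}=0$ for $\i\not\smaller\j$) and about the nontriviality at the leader index, both of which the paper leaves implicit.
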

\begin{proof}
  That $\Lambda_\i, \Psi_\j$ satisfy the degree constraints follows from $\tau \geq \deg(\Lambda)$ and $\deg \f^\j \leq |\j|(k-1)$.
  For $|\j| \geq s$ the \Pade approximation congruence is that of \cref{thm:key_equation}, and for $|\j| < s$, the congruence is satisfied since it is an equality in \cref{thm:key_equation}.
\end{proof}

\begin{remark}
  The modulus $x^{\tau s + |\j|(n - 1) + 1}$ in \cref{prop:decoding_shpade} arise as $1 + \deg\big(\sum_{|\i| < s} \lambda_\i A_{\i,\j}\big)$ for $|\j| < s$.
\end{remark}
In other words, solving the simultaneous Hermite \Pade approximation given by \cref{prop:decoding_shpade} will, up to a scalar multiple, recover $\Lambda, \f$ and $\OmegaVec$ if $\Psi_\j$ and $\Lambda_\i$ is the \emph{minimal} solution.
The decoding algorithm is built around exactly this expectation.
Formally, the Power-IRS decoding algorithm is as follows:
\begin{enumerate}
  \item Compute a \emph{minimal} solution $(\lambda_\i)_\i, (\psi_\j)_\j$ to the simultaneous Hermite \Pade approximation given by \cref{prop:decoding_shpade}, minimising $\deg \lambda_{\vec 0}$.
  \item Let $f_t := \psi_{\ve u_t}/\lambda_{\ve 0}$ for each $t=1,\ldots,m$, where the $\ve u_t$ is the vector with 0 everywhere but a 1 at index $t$.
  If these are not all polynomials of degree less than $k$, declare $\mathsf{fail}$.
  \item If $\wt(\vec r -  \vec c) \leq \tau$, return $\vec c$, where $\vec c$ is the codeword of the message polynomials $(f_1,\ldots,f_m)$ and $\wt(\vec x)$ is the number of non-zero columns of $\vec x$. Otherwise declare $\mathsf{fail}$.
\end{enumerate}
We discuss in later sections when we can expect this algorithm to work.
For now we turn to the complexity of the decoder.
Solving variants of \Pade approximation problems has a long and lustrous history in both coding theory and computer algebra.
In particular, \cite{rosenkilde2017SimHermPade} gives the currently fastest algorithm for our case.
The very general algorithm of \cite{jeannerod_fast_2016} can also be used but is slightly slower.
See also these papers for a discussion of the history of computing \Pade approximations.
We obtain:
\begin{proposition}\label{prop:complexity}
  Given a homogeneous IRS code $\IRS$ and $s, \ell \in \ZZ_{> 0}$, the Power-IRS decoding algorithm can be performed in $\Osoft\!\left(n s \CountLeq \ell^{\omega-1} \CountLt{s} \right)$ operations in $\Fq$.
\end{proposition}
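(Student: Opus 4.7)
The plan is to bound each of the three steps of the Power-IRS decoder and observe that Step~1 (solving the simultaneous Hermite \Pade approximation) dominates. Steps~2 and~3 are cheap: Step~2 consists of $\IntDegree$ polynomial divisions $\psi_{\ve u_t}/\lambda_{\vec 0}$ of polynomials of degree $O(ns)$, costing $\Osoft(\IntDegree ns)$; Step~3 amounts to $\IntDegree$ fast multi-point evaluations of the recovered message polynomials of degree less than $k$ at the $n$ evaluation points, costing $\Osoft(\IntDegree n)$. Both are dominated by the claimed bound, since $\CountLt{s}\CountLeq{\ell}^{\omega-1} \geq \IntDegree$ for $\ell \geq 1$.

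For Step~1, I would first read off the parameters of the \Pade problem given in \cref{prop:decoding_shpade}. By \cref{lem:binomial_sums}, there are $|I| = \CountLt{s}$ unknowns $\lambda_\i$ and $|J| = \CountLeq{\ell}$ constraints indexed by $\j$. Every modulus has degree $O(ns)$: for $|\j| < s$, the bound $\tau s + |\j|(n-1) + 1$ is at most $\tau s + (s-1)(n-1) + 1 = O(ns)$ using $\tau \leq n$; for $|\j| \geq s$, the modulus $G^{\MultParam}$ has degree exactly $ns$.

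Finally, I would invoke the fast algorithm of~\cite{rosenkilde2017SimHermPade} for simultaneous Hermite \Pade approximation, which computes a minimal solution in $\Osoft(|J|^{\omega-1} |I| \cdot D)$ operations in $\Fq$, where $D$ is the maximal modulus degree. Substituting $|I| = \CountLt{s}$, $|J| = \CountLeq{\ell}$, and $D = O(ns)$ yields the stated bound. The main obstacle I anticipate is checking that the non-uniform moduli and the individual degree bounds $N_\i, T_\j$ appearing in \cref{prop:decoding_shpade} fit into the input format assumed by~\cite{rosenkilde2017SimHermPade}; if not directly, this can be reduced to the uniform case by lifting all congruences to the lcm of the moduli (whose degree is still $O(ns)$) and encoding the per-column degree bounds via a suitable shift in the underlying minimal approximant basis computation.
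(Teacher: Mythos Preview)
Your approach is essentially the same as the paper's: both reduce everything to invoking the fast simultaneous Hermite \Pade solver of~\cite{rosenkilde2017SimHermPade} and checking that the problem dimensions $|I|=\CountLt{s}$, $|J|=\CountLeq{\ell}$, and modulus degree $O(ns)$ plug into that algorithm's cost bound to give the stated complexity.

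There is one omission worth noting. You carefully bound Steps~2 and~3 (which the paper silently absorbs as trivial), but you never bound the cost of \emph{setting up} the \Pade instance, i.e.\ actually computing the $S_{\i,\j}=A_{\i,\j}=\tbinom{\j}{\i}\R^{\j-\i}G^{|\i|}$. This is the one preprocessing cost the paper does address explicitly: it observes that all the powers $\R^{\i}\bmod G^s$ (for $|\i|\leq\ell$) can be built by dynamic programming, each new power costing one polynomial multiplication modulo $G^s$, hence $\Osoft(sn)$ apiece; from these, each $A_{\i,\j}$ is one further multiplication by a precomputed $G^{|\i|}$. The total is $\Osoft\!\big(ns\CountLeq{\ell}\CountLt{s}\big)$, which is within the target since $\omega\geq 2$. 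Without this remark, a naive computation of each $\R^{\j-\i}$ from scratch could exceed the claimed bound, so you should include it.
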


\begin{proof}
  The $\CountLeq \ell \CountLt{s}$ values $A_{\i,\j}$ can easily be computed in the target cost once we know $\R^\i \modop G^s$ for all $\i$ with $|\i| \leq \ell$.
  These are computed using dynamic programming in time $\Osoft(sn)$.
  Solving the \Pade approximation is the only other expensive step and has the target cost by using \cite{rosenkilde2017SimHermPade}.
\end{proof}

\section{Decoding Performance}

\noindent
One cannot hope to improve upon single-RS decoding in the worst case scenario: the sent codeword and error could be the same in each constituent code, so the interleaving gives no new information.
Similarly, list decoding IRS codes cannot go beyond list decoding RS codes.

As previous IRS decoders, Power-IRS sidesteps this issue by being a \emph{partial decoder} whose capability is characterised by two concepts: 1) the number of errors we expect to decode, which we will call the ``\emph{decoding radius}''; and 2) the \emph{probability of success} for a given number of errors, assuming uniformly random error patterns of a given weight.

Solving the simultaneous Hermite \Pade approximation of \cref{prop:decoding_shpade} will lead to finding $\vec \Lambda, \vec f$ and $\OmegaVec$ if there are not other solutions to the approximation.
In \cref{sec:decoding_radius} we define the decoding radius as just below the number of errors when ``generic'' solutions are sure to appear, effectively ruling out that we find the special solution.
In \cref{sec:Failure_Probability} we prove that success or failure is dictated only by the error, and not the sent codeword.
We then turn to the success probability: we have no formal bounds, but we discuss what we believe are the main contributing factors  and present a conjecture.
The conjecture is backed by simulations in \cref{sec:simulations}.

\subsection{Decoding Radius}
\label{sec:decoding_radius}

\begin{lemma}
  \label{lem:shpade_has_solution}
  Consider a simultaneous Hermite \Pade approximation (\cref{prob:SimHermitePade}).
  The approximation is guaranteed to have at least two $\Fq$-linearly independent non-zero solutions whenever
  \[
    \textstyle
    \sum_{i \in I} N_i > 1 + \sum_{j \in J}(\deg G_j - T_j)
  \]
\end{lemma}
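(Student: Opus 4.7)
The plan is to recast the simultaneous Hermite \Pade approximation as a homogeneous $\Fq$-linear system in the coefficients of the unknowns $\lambda_i$ and $\psi_j$, and then invoke the standard fact that a homogeneous linear system whose number of unknowns exceeds the number of equations by at least $d$ has solution space of $\Fq$-dimension at least $d$.

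First I would count unknowns: the degree restrictions $\deg \lambda_i < N_i$ and $\deg \psi_j < T_j$ give exactly
\[
\textstyle \sum_{i \in I} N_i + \sum_{j \in J} T_j
\]
coefficients in $\Fq$. Next I would count equations by rewriting each congruence $\sum_i \lambda_i S_{i,j} \equiv \psi_j \pmod{G_j}$ as the requirement that the coefficients of $x^0,\ldots,x^{\deg G_j - 1}$ in $\sum_i \lambda_i S_{i,j} - \psi_j$ all vanish. This is at most $\deg G_j$ linear equations for each $j$, hence at most $\sum_{j \in J} \deg G_j$ in total.

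Subtracting, the $\Fq$-dimension of the solution space is at least
\[
\textstyle \sum_{i \in I} N_i + \sum_{j \in J} T_j - \sum_{j \in J} \deg G_j \;=\; \sum_{i \in I} N_i - \sum_{j \in J}(\deg G_j - T_j).
\]
Under the stated hypothesis, this lower bound is at least $2$, so the kernel contains two $\Fq$-linearly independent non-zero tuples $\bigl((\lambda_i)_{i \in I}, (\psi_j)_{j \in J}\bigr)$, which is what we want.

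There is no real obstacle in this argument; the only bookkeeping subtlety is the edge case $T_j > \deg G_j$ for some $j$, where $\deg G_j - T_j$ is negative and the coefficients of $\psi_j$ in degrees $\deg G_j, \ldots, T_j - 1$ are unconstrained and should technically be added as free variables. This can only strengthen the dimension estimate above, so the stated bound remains valid without any case distinction.
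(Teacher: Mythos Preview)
Your approach is essentially the paper's: both reduce the problem to counting unknowns versus equations in a homogeneous $\Fq$-linear system and invoke the rank--nullity bound. The paper organises the count slightly differently, treating only the coefficients of the $\lambda_i$ as unknowns (so $\sum_i N_i$ of them) and regarding each $\psi_j$ as the reduction of $\sum_i \lambda_i S_{i,j}$ modulo $G_j$; the constraint $\deg \psi_j < T_j$ then becomes $\deg G_j - T_j$ linear conditions on the top coefficients of that reduction, which also makes your edge case $T_j > \deg G_j$ disappear.

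One small slip in your write-up: the congruence $\sum_i \lambda_i S_{i,j} \equiv \psi_j \pmod{G_j}$ is \emph{not} equivalent to the vanishing of the coefficients of $x^0,\ldots,x^{\deg G_j - 1}$ in $\sum_i \lambda_i S_{i,j} - \psi_j$ itself, but rather in its remainder upon division by $G_j$. Your equation count of $\deg G_j$ per $j$ is correct once phrased this way, so the argument goes through unchanged.
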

\begin{proof}
  The degree restrictions on the remainders $\psi_j$ can be considered as $\Fq$-linear restrictions in the coefficients of the $\lambda_i$: that $\deg \psi_j < T_j$ implies that the coefficients to $x^{T_j},\ldots,x^{\deg G_j - 1}$ are zero in the corresponding $\Fq[x]$-linear expression in the $\lambda_i$.
  This gives $\sum_{i \in I} N_i$ indeterminates and at least $\sum_{j \in J}(\deg G_j - T_j)$ restrictions over $\Fq$.
\end{proof}

\begin{theorem}\label{thm:decoding_radius}
  The simultaneous Hermite \Pade approximation of \cref{prop:decoding_shpade} is guaranteed to have a solution which is $\Fq$-linearly independent from $\Lambda_\i, \Psi_\j$ whenever $\tau > \tauRP$, where
  {\small
  \begin{align*}
    &\tauRP :=\\
    &n\left[ 1 - \tfrac{s \CountLt s - \WeightedLt s}{s\CountLeq \ell} \right]
          - \tfrac m {m+1} \tfrac \ell s (k-1) - \tfrac 1 s \left[1 - \tfrac 1 {\CountLeq \ell}\right]
  \end{align*}
  }
\end{theorem}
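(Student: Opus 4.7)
The plan is to apply \cref{lem:shpade_has_solution} to the simultaneous Hermite Padé approximation specified in \cref{prop:decoding_shpade}. The lemma guarantees two $\Fq$-linearly independent non-zero solutions whenever $\sum_{\i \in I} N_\i > 1 + \sum_{\j \in J}(\deg G_\j - T_\j)$. Since $(\Lambda_\i, \Psi_\j)$ spans only a one-dimensional subspace, at least one of the two linearly independent solutions lies outside that subspace, giving us the desired second solution. So the whole task reduces to evaluating the two sums and rearranging for $\tau$.

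For $\sum_{\i \in I} N_\i$, I would group the $\i$'s with $|\i| = \mu$ for $0 \leq \mu < s$; there are $\binom{m+\mu-1}{\mu}$ of them, and each contributes $\tau s - \mu + 1$. Applying \cref{lem:binomial_sums} gives
\[
\sum_{\i \in I} N_\i = (\tau s + 1)\CountLt{s} - \WeightedLt{s}.
\]
For $\sum_{\j \in J}(\deg G_\j - T_\j)$, I would split at $|\j| = s$. For $|\j| < s$, the modulus is $x^{\tau s + |\j|(n-1)+1}$, so $\deg G_\j - T_\j = |\j|(n-k)$ and the contribution is $(n-k)\WeightedLt{s}$. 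For $s \leq |\j| \leq \ell$, the modulus is $G^s$, so $\deg G_\j - T_\j = sn - \tau s - |\j|(k-1) - 1$, and using $\sum_{\j \in J} 1 = \CountLeq{\ell}$ and $\sum_{\j \in J} |\j| = \WeightedLeq{\ell}$ (another application of \cref{lem:binomial_sums}) this contributes $(sn - \tau s - 1)(\CountLeq{\ell} - \CountLt{s}) - (k-1)(\WeightedLeq{\ell} - \WeightedLt{s})$.

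Inserting the two sums into the inequality of \cref{lem:shpade_has_solution}, I would collect the $\tau s$-terms on the left — they combine with coefficient $\CountLt{s} + (\CountLeq{\ell} - \CountLt{s}) = \CountLeq{\ell}$ — and then divide by $s\CountLeq{\ell}$. A final simplification uses the identity
\[
\frac{\WeightedLeq{\ell}}{\CountLeq{\ell}} = \frac{m \binom{m+\ell}{m+1}}{\binom{m+\ell}{m}} = \frac{m\ell}{m+1},
\]
which converts the $(k-1)\WeightedLeq{\ell}/(s\CountLeq{\ell})$ term to $\frac{m}{m+1}\frac{\ell}{s}(k-1)$. Matching the remaining terms against $\tauRP$ is then mechanical.

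There is no genuine obstacle here: the argument is a direct application of \cref{lem:shpade_has_solution}, and the main work is careful bookkeeping of the two sums via \cref{lem:binomial_sums}. The only minor subtlety is the case split at $|\j| = s$ coming from the two different moduli $G_\j$ in \cref{prop:decoding_shpade}, which must be handled consistently when substituting into the inequality.
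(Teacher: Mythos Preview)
Your proposal is correct and follows essentially the same approach as the paper: apply \cref{lem:shpade_has_solution}, evaluate the two sums via \cref{lem:binomial_sums} with the case split at $|\j|=s$, and rearrange for $\tau$. You actually carry the computation further than the paper does, supplying the identity $\WeightedLeq{\ell}/\CountLeq{\ell} = m\ell/(m+1)$ that is needed to match the stated $\tauRP$; the paper stops at the intermediate inequality and leaves this last simplification implicit.
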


\begin{proof}
See \cref{app:technical_proofs}.
\end{proof}

Following the discussion above, we call $\tauRP$ the \emph{decoding radius} of the Power-IRS decoder.
In the following asymptotic analyses, $O_{R,m}$ means that $R$ and $m$ are considered constants (i.e.~hidden).

\begin{theorem}\label{thm:asymptotic_relative_decoding_radius}
Let $(\PowParam_i,\MultParam_i) = (i, \round{\gamma i}+1)$ for $i \in \ZZ_{>0}$, where $\gamma = \sqrt[\IntDegree+1]{\frac{k-1}{n}}$. Then,
\begin{align*}
\tau_\mathrm{new}(\PowParam_i,\MultParam_i) = n \Big( 1-\left(\tfrac{k-1}{n}\right)^{\frac{\IntDegree}{\IntDegree+1}} - O_{R,m}\!\left( \tfrac{1}{i} \right) \Big).
\end{align*}
\end{theorem}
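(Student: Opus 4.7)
The plan is a direct asymptotic calculation: substitute $(\ell,s) = (i, \lfloor\gamma i\rfloor + 1)$ into the explicit formula for $\tauRP$ from \cref{thm:decoding_radius}, expand each binomial coefficient to leading order in $i$, and then use the defining identity $\gamma^{m+1} = (k-1)/n$ to collapse the leading terms into the claimed closed form.

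First I would record that $s = \gamma i + O(1)$, hence $s/\ell = \gamma + O(1/i)$ and $\ell/s = 1/\gamma + O(1/i)$, with implicit constants depending only on $R$ and $m$. Using the standard expansions $\binom{m+a-1}{m} = a^m/m! + O(a^{m-1})$ and $\binom{m+a-1}{m+1} = a^{m+1}/(m+1)! + O(a^m)$ (constants depending only on $m$), the ``main'' fraction in $\tauRP$ satisfies
\[
s\CountLt{s} - \WeightedLt{s} = \tfrac{s^{m+1}}{m!} - \tfrac{m\, s^{m+1}}{(m+1)!} + O(s^m) = \tfrac{s^{m+1}}{(m+1)!} + O(s^m),
\]
while $s\CountLeq{\ell} = s\ell^m/m! + O(s\ell^{m-1})$. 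Dividing yields
\[
\frac{s\CountLt{s} - \WeightedLt{s}}{s\CountLeq{\ell}} = \frac{s^m}{(m+1)\ell^m}\bigl(1 + O(1/i)\bigr) = \frac{\gamma^m}{m+1} + O_{R,m}(1/i).
\]

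For the remaining two pieces of $\tauRP/n$: the term $\tfrac{m}{m+1}\tfrac{\ell}{s}(k-1)/n$ becomes $\tfrac{m(k-1)}{(m+1)n\gamma} + O_{R,m}(1/i)$, since $(k-1)/n \leq 1$; and $\tfrac{1}{sn}(1 - 1/\CountLeq{\ell}) = O_{R,m}(1/i)$. Summing these contributions gives
\[
\frac{\tauRP}{n} = 1 - \frac{\gamma^m}{m+1} - \frac{m(k-1)}{(m+1)n\gamma} + O_{R,m}(1/i).
\]
Finally, using $\gamma^{m+1} = (k-1)/n$, i.e.\ $n\gamma^m = (k-1)/\gamma$, the two middle terms combine into $\tfrac{1}{m+1}\bigl[\tfrac{k-1}{n\gamma} + \tfrac{m(k-1)}{n\gamma}\bigr] = \tfrac{k-1}{n\gamma} = ((k-1)/n)^{m/(m+1)}$, and multiplying through by $n$ yields the stated expression.

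The calculations are essentially routine; the only mild obstacle is bookkeeping of the $O(\cdot)$ constants. Since $\gamma$ is determined by $R$ and $m$ and is bounded away from both $0$ and $\infty$ for $R$ in any compact subinterval of $(0,1)$, every hidden constant that appears can be absorbed into $O_{R,m}(\cdot)$; the floor in $s = \lfloor \gamma i \rfloor + 1$ contributes only $O(1)$ to $s$ and is harmless. No additional structural work beyond \cref{thm:decoding_radius} is needed.
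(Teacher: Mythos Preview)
Your proposal is correct. Both your argument and the paper's reduce to showing that the ``main'' fraction $\frac{s\CountLt{s} - \WeightedLt{s}}{s\CountLeq{\ell}}$ tends to $\gamma^m/(m+1)$ with an $O(1/i)$ error, after which the algebra is identical. The paper, however, takes a more circuitous route to this estimate: it first proves a separate lemma (\cref{lem:binomial_ratio_convergence}) that $\binom{m+\lfloor\gamma i\rfloor}{m}/\binom{m+i}{m} = \gamma^m + O(1/i)$ via Stirling's formula, and then rewrites the fraction as $\big[1 - (1 - 1/s)\tfrac{m}{m+1}\big]\cdot\CountLt{s}/\CountLeq{\ell}$ using the identity $\WeightedLt{s} = \tfrac{m(s-1)}{m+1}\CountLt{s}$. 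Your direct polynomial expansion $\binom{m+a-1}{m} = a^m/m! + O(a^{m-1})$ is more elementary: since $m$ is fixed, the binomial coefficient is literally a degree-$m$ polynomial in $a$, so Stirling is unnecessary. Your approach is shorter and avoids the auxiliary lemma entirely; the paper's route has the minor advantage of isolating the ratio $\CountLt{s}/\CountLeq{\ell}$ as the single quantity to estimate, which makes the balancing choice $s \approx \gamma\ell$ slightly more transparent.
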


\begin{proof}
See \cref{app:technical_proofs}.
\end{proof}

\begin{corollary}
  For a fixed code and any constant $\varepsilon > 0$, we can choose $s, \ell \in O_{R,m}(1/\varepsilon)$ such that $\tauRP \geq n(1 - R^{\frac m {m+1}} - \varepsilon)$ where $R = \tfrac k n$.

  In this case, the algorithm has complexity $\Osoft\big(n (1/\varepsilon)^{m\omega + 1}\big)$.
\end{corollary}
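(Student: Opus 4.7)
The plan is to combine \cref{thm:asymptotic_relative_decoding_radius} with the complexity bound of \cref{prop:complexity}, and then expand the relevant binomial coefficients to express everything in terms of $1/\varepsilon$.

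First, I would choose parameters as in \cref{thm:asymptotic_relative_decoding_radius}, namely $\ell = i$ and $s = \lfloor \gamma i \rfloor + 1$ with $\gamma = \sqrt[m+1]{(k-1)/n}$, and pick $i$ as the smallest integer such that the hidden $O_{R,m}(1/i)$ term in \cref{thm:asymptotic_relative_decoding_radius} is at most $\varepsilon$. This immediately gives $i \in O_{R,m}(1/\varepsilon)$, and hence both $\ell \in O_{R,m}(1/\varepsilon)$ and $s \leq \ell + 1 \in O_{R,m}(1/\varepsilon)$, while ensuring $\tauRP \geq n(1 - R^{m/(m+1)} - \varepsilon)$ as claimed.

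Next, I would substitute these parameters into the complexity bound $\Osoft\!\big(n s \, \CountLeq{\ell}^{\omega-1} \CountLt{s}\big)$ from \cref{prop:complexity}. Since $m$ is treated as a constant, we have
\begin{align*}
\CountLeq{\ell} = \tbinom{m+\ell}{m} \in O_m(\ell^m) \quad \text{and} \quad \CountLt{s} = \tbinom{m+s-1}{m} \in O_m(s^m).
\end{align*}
Therefore, the complexity becomes
\begin{align*}
\Osoft\!\big(n \cdot s \cdot \ell^{m(\omega-1)} \cdot s^m\big) = \Osoft\!\big(n \cdot (1/\varepsilon)^{1 + m(\omega-1) + m}\big) = \Osoft\!\big(n \cdot (1/\varepsilon)^{m\omega + 1}\big),
\end{align*}
using $1 + m(\omega-1) + m = m\omega + 1$.

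The only slightly delicate point is verifying that the hidden constant in the $O_{R,m}(1/i)$ term of \cref{thm:asymptotic_relative_decoding_radius} indeed allows the choice $i \in O_{R,m}(1/\varepsilon)$; this is straightforward from the proof of that theorem, where the error term comes from bounding each of the correction terms in $\tauRP$ (the $-\tfrac{1}{s}[1 - 1/\CountLeq{\ell}]$ contribution and the discrepancy between $s$ and $\gamma i$) by explicit expressions inversely proportional to $i$. Otherwise the argument is a direct chain of substitutions and does not present significant obstacles.
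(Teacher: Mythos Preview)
Your proposal is correct and matches the paper's intended argument: the corollary is stated in the paper without proof, as it follows immediately from \cref{thm:asymptotic_relative_decoding_radius} and \cref{prop:complexity} by exactly the parameter choice and binomial estimate you give. The arithmetic $1 + m(\omega-1) + m = m\omega + 1$ and the observation $\tbinom{m+\ell}{m} \in O_m(\ell^m)$ are the only ingredients needed, and you have them.
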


\subsection{Expecting Successful Decoding}\label{sec:Failure_Probability}

Now we turn to the probability that Power-IRS will succeed.
We start with the observation that decoding success is independent of the sent codeword.

\begin{theorem}
The success of decoding $\r=\c+\e$ depends only on the error $\e$.
\end{theorem}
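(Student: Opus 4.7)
The plan is to exhibit an explicit linear bijection between the solutions of the Padé approximation (\cref{prop:decoding_shpade}) for $\vec r = \vec c + \vec e$ and for $\vec r' = \vec c' + \vec e$. Set $\delta := \f' - \f \in \Fq[x]_{<k}^m$, so that $\R' = \R + \delta$ (both sides have degree less than $n$ and agree at the $\alpha_i$'s). Observe that the ``true'' solution is mostly invariant under the shift $\vec c \mapsto \vec c'$: since $\R' - \f' = \R - \f$, the error evaluators $\OmegaVec$ and hence every $\Lambda_\i$ coincide for $\R$ and $\R'$. Only the $\Psi_\j$ part changes, and by \cref{lem:generalized_binomial} it becomes $\Lambda^s(\f+\delta)^\j = \sum_{\vec k \smaller \j}\tbinom{\j}{\vec k}\delta^{\vec k}\Psi_{\j-\vec k}$. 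Generalizing this to arbitrary solutions, I would define
\begin{align*}
\phi\colon \big((\lambda_\i)_\i,(\psi_\j)_\j\big) \mapsto \big((\lambda_\i)_\i,(\psi'_\j)_\j\big), \quad \psi'_\j := \sum_{\vec k \smaller \j}\tbinom{\j}{\vec k}\delta^{\vec k}\psi_{\j-\vec k}.
\end{align*}

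The main technical step is to verify that $\phi$ maps solutions for $\R$ to solutions for $\R'$. Using \cref{lem:generalized_binomial} to expand $A'_{\i,\j} = \tbinom{\j}{\i}(\R+\delta)^{\j-\i}G^{|\i|}$ together with the multinomial identity $\tbinom{\j}{\i}\tbinom{\j-\i}{\vec k} = \tbinom{\j}{\vec k}\tbinom{\j-\vec k}{\i}$, one obtains
\begin{align*}
\textstyle\sum_\i \lambda_\i A'_{\i,\j} \;=\; \sum_{\vec k \smaller \j}\tbinom{\j}{\vec k}\delta^{\vec k}\sum_\i\lambda_\i A_{\i,\j-\vec k},
\end{align*}
and then applying the Padé relation for $\R$ at each index $\j-\vec k$ yields $\sum_\i \lambda_\i A'_{\i,\j} \equiv \psi'_\j \pmod{G_\j}$. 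One must handle the two regimes separately: for $|\j| < s$ every $|\j-\vec k|<s$, so all summands contribute by equality; for $|\j|\geq s$, the small-$|\j-\vec k|$ terms contribute by equality while the large ones contribute modulo $G^s = G_\j$. The degree bound $\deg\psi'_\j < T_\j$ follows by telescoping $\deg\delta^{\vec k}\leq|\vec k|(k-1)$ with $\deg\psi_{\j-\vec k}<T_{\j-\vec k}$. Since $\phi$ is inverted by swapping $\delta\leftrightarrow -\delta$ and preserves $\vec\lambda$ verbatim, it is a linear isomorphism; in particular $\deg\lambda_{\vec 0}$ is preserved, so minimal solutions correspond under $\phi$.

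I would finish by tracing the algorithm's output. The Padé relation at $\j=\vec 0$ (where $A_{\vec 0,\vec 0}=1$ and $|\vec 0|<s$) forces $\psi_{\vec 0}=\lambda_{\vec 0}$, so the candidate messages transform as
\begin{align*}
f'_t \;=\; \psi'_{\vec u_t}/\lambda_{\vec 0} \;=\; \big(\psi_{\vec u_t}+\delta_t\psi_{\vec 0}\big)/\lambda_{\vec 0} \;=\; f_t+\delta_t,
\end{align*}
matching the shift $\f\mapsto\f+\delta$. In particular the ``degree less than $k$'' check in step~2 gives the same verdict for both received words, and the resulting candidate codewords satisfy $\tilde{\vec c}' = \tilde{\vec c} + (\vec c' - \vec c)$, so $\vec r' - \tilde{\vec c}' = \vec r - \tilde{\vec c}$ and step~3 also returns the same verdict. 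The main obstacle in the formal write-up will be the careful bookkeeping for $G_\j$ around the boundary $|\j-\vec k| = s$, where the equation regime and the $\bmod\, G^s$ regime of the Padé relation for $\R$ meet; once the case split is written uniformly, the calculation closes cleanly thanks to the binomial identity.
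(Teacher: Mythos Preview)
Your proposal is correct and takes essentially the same approach as the paper: both construct the transformation $\psi'_\j = \sum_{\vec k \smaller \j}\tbinom{\j}{\vec k}\delta^{\vec k}\psi_{\j-\vec k}$ with $\lambda$ unchanged, and verify it via the multinomial identity $\tbinom{\j}{\i}\tbinom{\j-\i}{\vec k} = \tbinom{\j}{\vec k}\tbinom{\j-\vec k}{\i}$. The paper phrases it as ``failure for $\vec r$ implies failure for $\vec r+\hat{\vec c}$'' rather than as an explicit bijection, and omits your step tracing the algorithm's output through $\psi_{\vec 0}=\lambda_{\vec 0}$ and the shift $f_t\mapsto f_t+\delta_t$; your version is a bit more thorough in that respect, but the mathematical core is identical.
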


\begin{proof}
The proof closely resembles the one of \cite[Proposition~6]{nielsen2015power}.
It suffices to show that if decoding $\r$ fails, then decoding $\r+\chat$ for any $\chat \in \IRS$ also fails. Let $\chat$ correspond to the message polynomial $\hat{\f}$. If decoding $\r$ fails, there is a solution $\lambda_\i, \psi_\j$ to the \Pade approximation of \cref{prop:decoding_shpade} for received word $\r$ with $\lambda_\0 \neq \Lambda^s$ and $\deg \lambda_\0 \leq \deg \Lambda^s$. We claim that $\psihat_\j := \sum_{\i \smaller \j} \tbinom{\j}{\i} \hat{f}^\i \psi_{\j-\i}$ and $\lambdahat_\i := \lambda_\i$ is a solution to the approximation problem for received word $\r+\chat$, so that decoding $\r + \chat$ also fails.
Let $\Rhat := \R+\fhat$.
Then the required congruences are satisfied since
\begin{align*}
\sum\limits_{\substack{\i \smaller \j \\ \len{\i}<\MultParam}} \lambdahat_\i \tbinom{\j}{\i} \Rhat^{\j-\i} G^{\len{\i}}
&= \sum\limits_{\substack{\i \smaller \j \\ \len{\i}<\MultParam}} \sum\limits_{\myindex \smaller \j-\i} \lambda_\i \underset{= \, \binom{\j}{\myindex} \binom{\j-\myindex}{\i}}{\underbrace{\tbinom{\j}{\i} \tbinom{\j-\i}{\myindex}}} \R^{\j-\i-\myindex} \fhat^\myindex G^{\len{\i}}  \\
&= \sum\limits_{\myindex \smaller \j} \tbinom{\j}{\myindex} \fhat^\myindex \sum\limits_{\substack{\i \smaller \j-\myindex \\ \len{\i}<\MultParam}} \lambda_\i  \tbinom{\j-\myindex}{\i} \R^{\j-\myindex-\i} G^{\len{\i}} \\
& \equiv \sum\limits_{\myindex \smaller \j} \tbinom{\j}{\myindex} \fhat^\myindex \psi_{\j-\myindex} = \psihat_\j \mod G_\j
\end{align*}
Also, $\deg \psihat_\j \leq \min_\myindex \{ \deg \psi_{\j-\myindex}+\len{\myindex} (k-1) \} \leq \tau \MultParam + \len{\j} (k-1) = T_\j$.
\end{proof}

Turning to the probability of success.
There are two mechanisms we should expect cause Power-IRS to fail on reception of $\vec r = \vec e + \vec c$: 1) the simultaneous Hermite \Pade approximation has a ``spurious'', non-coding theoretic related solution; and 2) there is a codeword $\vec c' \neq \vec c$ with $\wt(\vec r - \vec c') \leq \wt(\vec r - \vec c)$.
The former probability can be characterised for ``random'' \Pade approximations (i.e. random $S_{i,j}$ polynomials), and the latter can easily be estimated using classical coding theory.
Both are exponentially decaying in the value $(\tauRP - \varepsilon)$, where $\varepsilon$ is the number of errors.
We expected that these were essentially the only contributing factors to decoding failure; our simulations indicate that the observed failure rate are often, but not always, very close to the union bound of the above two probabilities.

This important question needs more investigation.
For now, based on the observed failure rates, and the formal bounds on success probability for Power decoding RS codes, we conjecture that the probability that Power-IRS succeeds on an error $\vec e$ with $\varepsilon$ non-zero columns can be lower bounded by $1 - q^{-b(\tauRP - \varepsilon)}$ for some constant $b > 1$ that depends on the code and decoder parameters $n$, $k$, $\IntDegree$, $\PowParam$, and $\MultParam$.

\section{Simulations}
\label{sec:simulations}

\noindent
We implemented the decoding algorithm in SageMath v7.5 \cite{stein_sagemath_????} and observed the decoding behaviour on random errors for a range of parameters.
The source code can be downloaded from \url{http://jsrn.dk/code-for-articles}.

Some of the results are outlined in \cref{tab:simulation_results_short}.
For instance, the $\IRS(257,86;2)$ code over $\F_{257}$ can be decoded by previous algorithms up to $114$ errors. Our algorithm is able to decode up to $124$ errors by choosing $(\ell,s) = (4,3)$ and we observed failure in $\approx 1.1 \cdot 10^{-5}$ fraction of the trials.

Over $\F_{17}$, we can decode up to $13$ errors with an observed failure rate of $\approx 1.9 \cdot 10^{-3}$ using our decoder for the code $\IRS(17,3;5)$ with parameters $(5,3)$ (instead of $11$ using \cite{wachterzeh2014decoding}).
Note that this is very close to $n-k = 14$.

We also compared our decoder to the example in \cite[Table~1]{wachterzeh2014decoding}: An $\IRS(16,2;3)$ code over $\F_{17}$. The decoder in \cite{wachterzeh2014decoding} can decode up to $12$ errors with observed failure rate $6.2 \cdot 10^{-3}$. By choosing the parameters of our decoder $(3,2)$, we observe a lower failure rate $9.1 \cdot 10^{-5}$ for the same number of errors. By further increasing the parameters, we decode one more error with $\approx 90\%$ probability.

In general, the observed failure rate $\hat{\mathrm{P}}_\mathrm{fail}$ rapidly decreases with the number of errors, backing up the conjecture of the previous section.
Also, whenever the number of errors was above the radius, decoding always failed.

\begin{table}[h]
\caption{Simulation results. Code parameters $q,n,k,m$ , Decoder parameters $(\ell,s)$, Number of samples $N$. Observed Failure Rate $\hat{\mathrm{P}}_\mathrm{fail}$. For comparison: Decoding radii $\tauWZB$ of \cite{wachterzeh2014decoding} and $\tauKL$ of \cite{krachkovsky1997decoding} for the chosen parameters.} 
\label{tab:simulation_results_short}
\centering
\renewcommand{\arraystretch}{1.3}
\setlength{\tabcolsep}{0.5em}
\scriptsize
\begin{tabular}{c||c|c|c||c|c|c||c|c||c}
$q$   & $n$   & $k$  & $m$ & $(\ell,s)$ & $\tau_\mathrm{new}$ & $\hat{\mathrm{P}}_\mathrm{fail}$  & $\tauWZB$ & $\tauKL$ & $N$    \\
\hline
$257$ & $257$ & $86$ & $2$ & $(3,2)$    & $120$ & $0$          & $114$ & $114$ & $10^6$ \\
      &       &      &     & $(4,3)$    & $124$ & $1.1 \cdot 10^{-5}$ & $114$ & $114$ & $10^6$ \\
\hline
$43$  & $43$  & $18$ & $2$ & $(4,3)$    & $18$  & $5.4 \cdot 10^{-4}$ &  $16$ &  $16$ & $10^6$ \\
\hline
$17$  & $17$  & $3$  & $2$ & $(3,2)$    & $11$  & $1.9 \cdot 10^{-3}$ &  $10$ &   $9$ & $10^6$ \\
      &       &      & $4$ & $(4,3)$    & $13$  & $2.8 \cdot 10^{-2}$ &  $10$ &   $9$ & $10^4$ \\
      &       &      & $5$ & $(5,3)$    & $13$  & $1.9 \cdot 10^{-3}$ &  $10$ &   $9$ & $10^4$ \\
\hline
$17$  & $16$  & $2$  & $3$ & $(3,2)$    & $12$  & $9.1 \cdot 10^{-5}$ &  $12$ &  $10$ & $10^6$ \\
    &       &      &     & $(6,3)$    & $13$  & $1.0 \cdot 10^{-1}$ &  $12$ &  $10$ & $10^4$ \\
\hline
$16$  & $16$  & $3$  & $3$ & $(2,1)$    & $10$  & $0$         &  $10$ &   $9$ & $10^6$ \\
      &       &      &     & $(3,2)$    & $11$  & $2.1 \cdot 10^{-5}$ &  $10$ &   $9$ & $10^6$ \\
\end{tabular}
\end{table}

\section{Comparison to Related Work}

\noindent
We now compare the asymptotic decoding performance of Power-IRS to other IRS decoders and Folded RS codes.

\subsection{Other Decoding Algorithms for Interleaved RS Codes}
\label{ssec:comparison_other_IRS_decoders}

Using the strict generalized Bernoulli inequality, we obtain
\begin{align*}
\tfrac{\IntDegree}{\IntDegree+1} (1-R) < 1-R^\frac{\IntDegree}{\IntDegree+1}
\end{align*}
for all rates $R \in (0,1)$ and interleaving degrees $\IntDegree \geq 1$. Hence, our algorithm improves upon the algorithms in \cite{krachkovsky1997decoding,schmidt2009collaborative} at all rates.
For $R>\tfrac{1}{3}$ and $\IntDegree>2$, this is the first improvement since 1997 \cite{krachkovsky1997decoding}.
For $R < \tfrac{1}{3}$, the previous best are the CS~\cite{coppersmith2003reconstructing} and WZB~\cite{wachterzeh2014decoding} algorithms (the WZB algorithm strictly improves upon the KL \cite{krachkovsky1997decoding} and SSB \cite{schmidt2007enhancing} algorithm).
The relative decoding radius of the CS algorithm is $\tfrac{\tauCS}{n} \to 1-R^\frac{\IntDegree}{\IntDegree+1}-R$ $(n \to \infty)$ \cite[Theorem~1]{coppersmith2003reconstructing},
which is strictly smaller than our $1-R^\frac{\IntDegree}{\IntDegree+1}$.
The WZB algorithm is a special case of Power-IRS where $s = 1$.
As demonstrated in \cref{fig:radii_comparison_m=2} on the first page for $m=3$, the additional degree of freedom of $\MultParam$ greatly increases the decoding capability.
Note also that for $\IntDegree=1$, Power-IRS reduces to the algorithm in \cite{nielsen2015power} and decodes to the Johnson radius as in \cite{guruswami1998improved}.

In his PhD thesis \cite[Chapter~2]{parvaresh2007algebraic}, Parvaresh described a decoding algorithm for interleaved RS codes based on multi-variate interpolation, which can be seen as a generalization of the Guruswami--Sudan algorithm.
The existence of a suitable $(m+1)$-variate interpolation polynomial is guaranteed if the relative number of errors is smaller than $1-R^{\frac{\IntDegree}{\IntDegree+1}}$.
However, the root-finding step is only described for the case $\IntDegree=2$ (cf.~\cite[Section~2.5]{parvaresh2007algebraic}).
To the best of our knowledge, no subsequent work on this topic was published by the author, so the case $\IntDegree>2$ remains an open problem.

Cohn and Heninger \cite{cohn2013approximate} proposed an algorithm for noisy multi-polynomial reconstruction, which is closely related to decoding IRS codes.
More precisely, \cite[Theorem~3]{cohn2013approximate} implies that decoding is successful if the fraction of errors is less than $1-R^{\frac{\IntDegree}{\IntDegree+1}}$, given that certain polynomials, which depend on the algorithm's input, are algebraically independent.
The paper does not state under which conditions this so-called algebraic independence hypothesis holds.
In future work, the performance of this decoder must be evaluated more thoroughly.
In any case, our algorithm seems to result in a much smaller complexity due to the non-existence of the heavy root-finding step, which requires resultant or Gr\"obner-basis computations in \cite{cohn2013approximate}.

\subsection{Folded RS Codes}

\emph{$m$-folded Reed--Solomon codes} are very similar to IRS codes and are interesting since they allow list-decoding beyond the Johnson radius \cite[Theorem~4.4]{guruswami2008explicit}: for any $\varepsilon>0$, there is a family of $m$-folded RS codes of rate $R$ for which a fraction of $1-R-\varepsilon$ errors can be list-decoded, where $m \in O(1/\varepsilon^2)$.

In comparison, our results indicate that for any $\varepsilon>0$, any $m$-interleaved RS code of rate $R$ with $m = \tfrac{\log(1/(R+\varepsilon))}{\log(1+\varepsilon/R)} \leq \tfrac{\log(1/R)}{\varepsilon/R} \in O(1/\varepsilon)$ can correct a fraction of
\begin{align*}
1-R^\frac{\IntDegree}{\IntDegree+1} = 1-R^{\frac{\log(R+\varepsilon)}{\log(R)}} = 1-R-\varepsilon
\end{align*}
errors.
Hence the decoding radius of Power-IRS convergences much faster to capacity $1-R$ with respect to the interleaving degree (equivalently the field size $q^m$).
It should be kept in mind that the algorithm in \cite{guruswami2008explicit} is a list decoder, so it guarantees to return a list of codewords containing the sent codeword, which is a much stronger guarantee than Power-IRS.

Another related capacity-achieving construction is univariate multiplicity codes \cite{kopparty_list-decoding_2012} (also called Derivative codes) which has the same convergence rate $m \approx O(1/\varepsilon^2)$ as Folded RS codes.

\section{Conclusion}

\noindent
We have introduced a new partial decoding algorithm for $m$-Interleaved Reed--Solomon codes that can decode, with seemingly high probability, up to $n(1 - R^{\frac m {m+1}} - \varepsilon)$ errors with complexity quasi-linear in $n$ and polynomial in $1/\varepsilon$.
Our decoding radius is motivated by essentially assuming that the underlying linear system is random, but simulations on a variety of parameters strongly back up this value.
Formally bounding the success probability remains an open problem.

\section*{Acknowledgement}

We would like to thank Hannes Bartz for making us aware of \cite{parvaresh2007algebraic} and Vincent Neiger for pointing us at \cite{cohn2013approximate}.

\bibliographystyle{IEEEtran}
\bibliography{main}

\appendix

\subsection{Technical Proofs}
\label{app:technical_proofs}

\begin{proof}[Proof of \cref{thm:decoding_radius}]
By \cref{lem:shpade_has_solution}, we are ensured a solution different from $\Lambda_\i, \Psi_\j$ if $\sum_{|\i| < s} N_\i > \sum_{\j \leq \ell}(\deg G_\j - T_\j) + 1$, where $N_\i, T_\j, G_\j$ are as in \cref{prop:decoding_shpade}.
    We get {\small
    \begin{align*}
      &\sum_{|\i| < s} N_\i = \sum_{|\i| < s}(\tau s - |\i| + 1) = (\tau s + 1) \CountLtt{s} - \WeightedLtt{s}, \\
      &\sum_{|\j| \leq \ell}(\deg G_\j - T_\j) = \\
      &\quad \sum_{|\j| < s}\big(|\j|(n-k)\big) + \sum_{s \leq |\j| \leq \ell} \left(s(n - \tau) - |\j|(k-1) - 1\right) = \\
      &\quad (n-k)\WeightedLtt{s} + \left(s(n-\tau)-1\right)\left(\CountLeqt{\ell} - \CountLtt{s}\right) \\
          &\qquad\qquad - (k-1)\left(\WeightedLeqt{\ell} - \WeightedLtt{s}\right).
    \end{align*}
    The inequality becomes:
    \begin{align*}
      s \tau \CountLeqt{\ell} &> n\left[\WeightedLtt{s} + s\left(\CountLeqt{\ell} - \CountLtt{s}\right)\right] \\
                               & \quad - (k-1)\WeightedLtt \ell - \CountLeqt \ell + 1.
    \end{align*}
	}
\end{proof}

\begin{lemma}\label{lem:binomial_ratio_convergence}
Let $\IntDegree \in \ZZ_{>0}$ and $\gamma \in (0,1)$ be fixed.
Then for $i \in \ZZ_{>0}$ going to infinity, we have
\begin{align*}
\tfrac{\binom{\IntDegree+\round{\gamma i}}{\IntDegree}}{\binom{\IntDegree+i}{\IntDegree}} = \gamma^\IntDegree + O_{\gamma,m}\left( \tfrac{1}{i} \right).
\end{align*}
\end{lemma}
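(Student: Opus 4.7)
The plan is to write both binomial coefficients as falling products and then analyse the resulting ratio factor by factor. Expanding,
\[
\binom{m+i}{m} = \frac{1}{m!}\prod_{t=1}^{m}(i+t), \qquad \binom{m+\round{\gamma i}}{m} = \frac{1}{m!}\prod_{t=1}^{m}(\round{\gamma i}+t),
\]
so the $m!$ cancels and
\[
\frac{\binom{m+\round{\gamma i}}{m}}{\binom{m+i}{m}} = \prod_{t=1}^{m} \frac{\round{\gamma i}+t}{i+t}.
\]

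Next, I would analyse each factor separately. Since $\round{\gamma i} = \gamma i - \{\gamma i\}$ with $\{\gamma i\} \in [0,1)$, each factor can be written as
\[
\frac{\round{\gamma i}+t}{i+t} = \frac{\gamma i + (t - \{\gamma i\})}{i+t} = \gamma + \frac{t - \{\gamma i\} - \gamma t}{i+t} = \gamma + O_{\gamma,m}\!\left(\tfrac{1}{i}\right),
\]
since the numerator of the correction term is bounded in absolute value by $t+1 \leq m+1$ (with $\gamma, m$ fixed) while the denominator grows linearly in $i$.

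The final step is to propagate these $O(1/i)$ corrections through a product of $m$ factors. Writing each factor as $\gamma + \varepsilon_t$ with $|\varepsilon_t| \leq C/i$ for a constant $C$ depending only on $\gamma, m$, expansion of the product gives
\[
\prod_{t=1}^{m}(\gamma + \varepsilon_t) = \gamma^m + \sum_{\emptyset \neq S \subseteq \{1,\dots,m\}} \gamma^{m-|S|} \prod_{t \in S} \varepsilon_t,
\]
and each term in the sum is bounded by a constant (depending on $\gamma, m$) times $1/i$, since $|S| \geq 1$. Thus the entire correction is $O_{\gamma, m}(1/i)$, which establishes the claim. There is no real obstacle here; the only thing to be careful about is folding the floor function into the $O(1/i)$ term, which is handled uniformly by the bound $|\{\gamma i\}| < 1$.
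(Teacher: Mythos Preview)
Your proof is correct, and it is considerably more elementary than the paper's own argument. The paper proves the same estimate by expanding both binomials via Stirling's formula, then controlling three separate pieces: a square-root factor, two exponential-type factors $\big(1+\tfrac{m}{\gamma i}\big)^{\gamma i}$ and $\big(1+\tfrac{m}{i}\big)^{-i}$ approximated by $e^{\pm m}$, and a power $\big(\tfrac{m+\gamma i}{m+i}\big)^{m}$ expanded binomially. Each of these contributes an $O(1/i)$ error, and the $e^{m}\cdot e^{-m}$ factors cancel.

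Your route bypasses all of this by using the exact identity $\binom{m+i}{m}=\tfrac{1}{m!}\prod_{t=1}^{m}(i+t)$, which collapses the ratio to a product of $m$ elementary factors, each trivially $\gamma+O_{\gamma,m}(1/i)$. Since $m$ is fixed, the product is $\gamma^{m}+O_{\gamma,m}(1/i)$ by straightforward expansion. This is both shorter and avoids the asymptotic machinery (Stirling, $(1+x/i)^i\to e^x$) entirely; the only thing the paper's approach might buy is a more explicit dependence of the hidden constant on $m$ and $\gamma$, but that is not needed here.
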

\begin{proof}
Using Stirling's formula $(\ast)$ and
{\small
\begin{align}
&1 \leq \sqrt{\tfrac{(\IntDegree+\gamma i)i}{(\IntDegree+i)\gamma i}} = \sqrt{1+\tfrac{(1-\gamma)m}{\gamma i}} \leq 1+\tfrac{(1-\gamma)m}{\gamma i}, \label{eq:convergence_sqrt} \\
&\left| e^x-\left( 1+\tfrac{x}{i}\right)^i\right| = O\left( \tfrac{1}{i} \right) \quad \forall \, x \in \mathbb{R}, \label{eq:convergence_euler} \\
&\left(\tfrac{\IntDegree+\gamma i}{\IntDegree+i}\right)^\IntDegree = \gamma^\IntDegree+\sum\limits_{j=1}^{\IntDegree} \tbinom{\IntDegree}{j} \gamma^{m-j} \left(\tfrac{(1-\gamma)m}{m+i}\right)^{j} = \gamma^\IntDegree+O_{\gamma,m}\left( \tfrac{1}{i} \right), \label{eq:convergence_power}
\end{align}}
we obtain
\vspace{-0.5cm}
\begin{align*}
&\tfrac{\binom{\IntDegree+\round{\gamma i}}{\IntDegree}}{\binom{\IntDegree+i}{\IntDegree}}
= \tfrac{(\IntDegree+\round{\gamma i})! i!}{\round{\gamma i}! (\IntDegree+i)!} \overset{(\ast)}{=} \overset{\overset{\eqref{eq:convergence_sqrt}}{=} 1+O(\frac{1}{i})}{\overbrace{\sqrt{\tfrac{(\IntDegree+\gamma i)i}{(\IntDegree+i)\gamma i}}}} \cdot \tfrac{(\IntDegree+\gamma i)^{\IntDegree+\gamma i} i^i}{(\gamma i)^{\gamma i} (\IntDegree+i)^{\IntDegree+i}} \\
& \quad \quad \quad \quad \quad \quad \quad \quad \quad \quad  \quad\cdot \underset{= 1}{\underbrace{e^{(\IntDegree+\gamma i)+i-\gamma i - (\IntDegree+i)}}} + O\left( \tfrac{1}{i} \right) \\
&= \underset{\underset{\eqref{eq:convergence_power}}{=} \gamma^\IntDegree+O(\frac{1}{i})}{\underbrace{\left(\tfrac{\IntDegree+\gamma i}{\IntDegree+i}\right)^\IntDegree}} \cdot \underset{\underset{\eqref{eq:convergence_euler}}{=} e^\IntDegree+O(\frac{1}{i})}{\underbrace{\tfrac{(\IntDegree+\gamma i)^{\gamma i}}{(\gamma i)^{\gamma i}}}} \cdot \underset{\underset{\eqref{eq:convergence_euler}}{=} e^{-\IntDegree}+O(\frac{1}{i})}{\underbrace{\tfrac{(i)^{i}}{(\IntDegree+i)^i}}} + O\left( \tfrac{1}{i} \right) = \gamma^\IntDegree + O\left( \tfrac{1}{i} \right),
\end{align*}
which proves the claim.
\end{proof}

\begin{proof}[Proof of \cref{thm:asymptotic_relative_decoding_radius}]
We choose $(\PowParam_i,\MultParam_i)$ as in the theorem statement. Using \cref{lem:binomial_ratio_convergence}, we obtain
\begin{align*}
\tfrac{\tauRP}{n}
  &= 1-\Big[1+\underset{= \, 1-O\left(\frac{1}{i}\right)}{\underbrace{\big(1-\tfrac{1}{\MultParam_i}\big)}} \tfrac{\IntDegree}{\IntDegree+1}\Big] \underset{\frac{\binom{\IntDegree+\round{\gamma i}}{\IntDegree}}{\binom{\IntDegree+i}{\IntDegree}} = \, \gamma^\IntDegree+O\left(\frac{1}{i}\right)}{\underbrace{\tfrac{\binom{\IntDegree+\MultParam_i-1}{\IntDegree }}{\binom{\IntDegree+\PowParam_i}{\IntDegree }}}\phantom{XXXXX}} \\
&\quad - \tfrac{\IntDegree}{\IntDegree+1} \underset{\substack{= \, \gamma^{-1}\\ + O\left( \frac{1}{i} \right)}}{\underbrace{\tfrac{\PowParam_i}{\MultParam_i}}}\tfrac{k-1}{n} 
- \underset{= \, O\left(\tfrac{1}{i}\right)}{\underbrace{\tfrac{1}{\MultParam_i}}} \underset{= \, 1-O\left(\frac{1}{i}\right)}{\underbrace{\left[1 - \tfrac 1 {\CountLeq \ell}\right]}} \\
&= 1+\tfrac{m}{m+1} \underset{= \, 0}{\underbrace{\Big( \gamma^\IntDegree-\gamma^{-1} \tfrac{k-1}{n} \Big)}} - \gamma^\IntDegree - O\left( \tfrac{1}{i} \right) \\
&= 1 - \left(\tfrac{k-1}{n}\right)^{\frac{\IntDegree}{\IntDegree+1}} - O\left( \tfrac{1}{i} \right),
\end{align*}
which proves the claim.
\end{proof}

\end{document}